\spnewtheorem*{oproof}{Outline of the proof}{\rm\bfseries}{\rmfamily}
\tikzstyle{every node}=[circle,draw=black,inner sep = -2mm,minimum size = 6mm,fill=white]
\newcounter{instr}
\newcommand{\ninstr}{\refstepcounter{instr}\theinstr.}
\newcommand{\AND}{\mathrel{\&}}
\newcommand{\OR}{\mathrel{|}}
\newcommand{\NOT}{\mathop{\sim}}
\newcommand{\MOD}{\mathrel{\text{mod}}} 
\newcommand{\bpstyle}[1]{\mathsf{#1}}
\newcommand{\gpattern}{S_1\cdot j_1\cdot S_2\cdot \ldots\cdot j_{\ell-1}\cdot S_{\ell}}
\newcommand{\gkpattern}{\bar{S}_1\cdot \bar{j}_1\cdot \bar{S}_2\cdot \ldots\cdot \bar{j}_{\ell-1}\cdot \bar{S}_{\ell}}
\newcommand{\gcpattern}{X_1\cdot j_1\cdot X_2\cdot \ldots\cdot j_{\ell-1}\cdot X_{\ell}}
\newcommand{\gckpattern}{\bar{X}_1\cdot \bar{j}_1\cdot \bar{X}_2\cdot \ldots\cdot \bar{j}_{\ell-1}\cdot \bar{X}_{\ell}}
\newcommand{\gqprefix}{S_1\cdot j_1\cdot S_2\cdot \ldots\cdot j_{i-1}\cdot S_i}
\newcommand{\gqrule}{(a_1\cdot (i_2-i_1-1)\cdot\ldots\cdot (i_q-i_{q-1}-1)\cdot a_q)}
\newcommand{\gwspan}{g_{\text{\rm w-span}}}
\newcommand{\gsize}{g_{\text{\rm size}}(\mathcal{P})}
\newcommand{\gmin}{g_{\min}(\mathcal{P})}
\newcommand{\gmax}{g_{\max}(\mathcal{P})}
\newcommand{\gqalgoC}{\textsc{gq-matcher}\xspace}
\newcommand{\gqalgoD}{\textsc{gq-matcher-t}\xspace}
\newcommand{\len}[1]{\text{\rm len}(#1)}
\newcommand{\plen}{\len{\mathcal{P}}}
\newcommand{\klen}[1]{\text{\rm k-len}(#1)}
\newcommand{\pklen}{\klen{\mathcal{P}}}
\newcommand{\weight}{\omega}
\newcommand{\alabel}{\mathit{label}\xspace}
\newcommand{\aroot}{\mathit{root}\xspace}
\newcommand{\ofail}{\mathit{f_o}\xspace}
\newcommand{\T}{\mathcal{T}\xspace}
\newcommand{\fail}{\mathit{fail}\xspace}
\newcommand{\ceil}[1]{\lceil #1 \rceil}
\newcommand{\floor}[1]{\lfloor #1 \rfloor}
\title{Motif matching using gapped patterns\thanks{A preliminary version of this paper appeared in the proceedings of the 24th International Workshop on Combinatorial Algorithm}}
\author{Emanuele Giaquinta\inst{1} \and Kimmo Fredriksson\inst{2} \and Szymon Grabowski\inst{3} \and Alexandru I. Tomescu\inst{1,4} \and Esko Ukkonen\inst{1}}
\institute{Department of Computer Science, University of Helsinki, Finland
	  \email{\{emanuele.giaquinta $\mid$ tomescu $\mid$ ukkonen\}@cs.helsinki.fi} \and
	  School of Computing, University of Eastern Finland
	  \email{kimmo.fredriksson@uef.fi} \and
	  Institute of Applied Computer Science, Lodz University of Technology, Al.\ Politechniki 11, 90--924 {\L}\'od\'z, Poland
	  \email{sgrabow@kis.p.lodz.pl} \and
	  Helsinki Institute for Information Technology HIIT
}
\begin{document}

\sloppy

\maketitle

\begin{abstract}
  We present new algorithms for the problem of \emph{multiple string
    matching} of gapped patterns, where a gapped pattern is a sequence
  of strings such that there is a gap of fixed length between each two
  consecutive strings. The problem has applications in the discovery
  of transcription factor binding sites in DNA sequences when using
  generalized versions of the Position Weight Matrix model to describe
  transcription factor specificities. In these models a motif can be
  matched as a set of gapped patterns with unit-length keywords. The
  existing algorithms for matching a set of gapped patterns are
  worst-case efficient but not practical, or \emph{vice versa}, in
  this particular case. The novel algorithms that we present are based
  on dynamic programming and bit-parallelism, and lie in a
  middle-ground among the existing algorithms. In fact, their time
  complexity is close to the best existing bound and, yet, they are
  also practical. We also provide experimental results which show that
  the presented algorithms are fast in practice, and preferable if all
  the strings in the patterns have unit-length.
\end{abstract}

\section{Introduction}

We consider the problem of matching a set $\mathcal{P}$ of gapped
patterns against a given text of length $n$, where a gapped
pattern is a sequence of strings, over a finite alphabet $\Sigma$
of size $\sigma$, such that there is a gap of fixed length
between each two consecutive strings.
We are interested in computing the list of matching patterns for each
position in the text. This problem is a
specific instance of the \emph{Variable Length Gaps problem}~\cite{DBLP:journals/tcs/BilleGVW12} (VLG problem)
for multiple patterns and has applications in the
discovery of transcription factor (TF) binding sites in DNA sequences when
using generalized versions of the Position Weight Matrix (PWM) model to represent TF binding specificities.
The paper~\cite{GGU} describes how a motif represented as a
generalized PWM can be matched as a set of gapped patterns with unit-length
keywords, and presents algorithms for the restricted case of patterns
with two unit-length keywords.

In the VLG problem a pattern is
a concatenation of strings and of variable-length gaps.
An efficient approach to solve the problem for a single pattern is
based on the simulation of nondeterministic finite
automata~\cite{DBLP:journals/jcb/NavarroR03,DBLP:conf/lata/FredrikssonG09}.
A method to solve the case of one or more patterns is to
translate the patterns into a regular
expression~\cite{DBLP:journals/algorithmica/NavarroR04,DBLP:conf/soda/BilleT10}.
The best time bound for a regular expression is $O(n(k\frac{\log w}{w} +
\log \sigma))$~\cite{DBLP:conf/soda/BilleT10}, where $k$ is the number of
the strings and gaps in the pattern and $w$ is the machine word size in bits.
Observe that in the case of unit-length keywords $k=\Theta(\plen)$,
where $\plen$ is the total number of alphabet symbols in the patterns.
There are also algorithms efficient
in terms of the total number $\alpha$ of occurrences of the strings in
the patterns (keywords) within the
text~\cite{DBLP:journals/jcb/MorgantePVZ05,DBLP:conf/cocoon/RahmanILMS06,DBLP:journals/tcs/BilleGVW12}
\footnote{Note that the number of occurrences of a keyword that occurs in $r$
patterns and in $l$ positions in the text is equal to $r\times l$}.
The best bound obtained for a
single pattern is $O(n\log \sigma + \alpha)$~\cite{DBLP:journals/tcs/BilleGVW12}.
This method can also be extended to multiple patterns. However, if all
the keywords have unit length this result is not ideal, because in
this case $\alpha$ is $\Omega(n\frac{\plen}{\sigma})$ on average if we
assume that the symbols in the patterns are sampled from $\Sigma$
according to a uniform distribution.
A similar approach for multiple patterns~\cite{DBLP:conf/wea/HaapasaloSSS11} leads to
$O(n(\log \sigma + K) + \alpha')$ time, where $K$
is the maximum number of suffixes of a keyword that are also keywords
and $\alpha'$ is the number of text occurrences of pattern
prefixes that end with a keyword. This result may be preferable in
general when $\alpha' < \alpha$. In the case of unit-length keywords, however, a lower bound similar to the one
on $\alpha$ holds also for $\alpha'$, as the prefixes of
unit length have on average $\Omega(n\frac{|\mathcal{P}|}{\sigma})$ occurrences in the text.
Recently, a variant of this algorithm based on word-level parallelism
was presented in~\cite{DBLP:conf/lata/SippuS13}. This algorithm works
in time $O(n(\log \sigma + (\log |\mathcal{P}| +
\frac{k}{w})\alpha_m))$, where $k$ in this case is the maximum number
of keywords in a single pattern and $\alpha_m\ge \ceil{\alpha / n}$ is the maximum number
of occurrences of keywords at a single text position.
When $\alpha$ or $\alpha'$ is large, the bound of~\cite{DBLP:conf/soda/BilleT10} may be preferable.
The drawback of this algorithm is that, to our knowledge, the method
used to implement fixed-length gaps, based on maintaining multiple bit
queues using word-level parallelism, is not practical.

Note that the above bounds do not include preprocessing time and the
$\log \sigma$ term in them is due to the simulation of the Aho-Corasick
automaton for the strings in the patterns.

In this paper we present two new algorithms, based on dynamic
programming and bit-parallelism, for the problem of matching a set of
gapped patterns. The first algorithm has $O(n(\log\sigma + \gwspan
\ceil{\pklen / w}) + occ)$-time complexity, where $\pklen$ is the
total number of keywords in the patterns and $1\le \gwspan\le w$ is
the maximum number of distinct gap lengths that span a single word in
our encoding. This algorithm is preferable only when $\gwspan\ll w$. We then show how to improve
the time bound to $O(n(\log\sigma + \log^2\gsize\ceil{\pklen / w}) +
occ)$, where $\gsize$ is the size of the variation range of the gap
lengths. Note that in the case of unit-length keywords we have $\pklen
= \plen$. This bound is a moderate improvement over the more general
bound for regular expressions by Bille and
Thorup~\cite{DBLP:conf/soda/BilleT10} for $\log\gsize = o(\sqrt{\log
  w})$. This algorithm can also be extended to support character
classes with no overhead. The second algorithm is based on a different
parallelization of the dynamic programming matrix and has
$O(\ceil{n/w}\;\plen+n+occ)$-time complexity. The advantage of this
bound is that it does not depend on the number of distinct gap
lengths. However, it is not strictly on-line, because it processes the
text $w$ characters at a time and it also depends on $\plen$ rather
than on $\pklen$. Moreover, it cannot support character classes without
 overhead. The proposed algorithms obtain a bound similar to the one
of~\cite{DBLP:conf/soda/BilleT10}, in the restricted case of
fixed-length gaps, while being also practical. For this reason, they
provide an effective alternative when $\alpha$ or $\alpha'$ is large.
They are also fast in practice, as shown by experimental evaluation.
A comparison of our algorithms with the existing ones is summarized in Table $1$.

The rest of the paper is organized as follows. In
Section~\ref{sec:Notions} we recall some preliminary notions and
elementary facts. In Section~\ref{sec:motivation} we discuss the
motivation for our work. In Section~\ref{sec:dp} we describe the method
based on dynamic programming for matching a set of gapped patterns and
then in Section~\ref{sec:gq-matcher} and~\ref{sec:gq-matcher-t} we
present the new algorithms based on it. Finally, in
Section~\ref{sec:experiments} we present experimental results to
evaluate the performance of our algorithms.

\begin{table}[t]
\begin{center}
\begin{tabular}{ll}
Time & Reference \\
\hline
$O(n\log\sigma + \alpha)$ & Bille et al.~\cite{DBLP:journals/tcs/BilleGVW12} \\
$O(n(\log\sigma + K) + \alpha')$ & Haapasalo et al.~\cite{DBLP:conf/wea/HaapasaloSSS11} \\
$O(n(\log\sigma + \log w\ceil{\pklen / w}) + occ)$ & Bille and Thorup~\cite{DBLP:conf/soda/BilleT10} \\
$O(n(\log\sigma + \log^2\gsize\ceil{\pklen / w}) + occ)$ & This paper \\
$O(\ceil{n/w}\plen + n + occ)$ & This paper \\
\hline
\end{tabular}
\end{center}

\caption{Comparison of different algorithms for the multiple string
  matching with gapped patterns problem. $\pklen$ and $\plen$ are the
  total number of keywords and symbols in the patterns, respectively.
  $\gsize$ is the size of the variation range of the gap lengths.
  $\alpha \le n\pklen$ and $\alpha' \le n\pklen$ are the total number of
  occurrences in the text of keywords and pattern prefixes,
  respectively. $K \le \pklen$ is the maximum number of suffixes of a
  keyword that are also keywords.}
\end{table}

\section{Basic notions and definitions}\label{sec:Notions}

Let $\Sigma$ denote an integer alphabet of size $\sigma$ and
$\Sigma^*$ the Kleene star of $\Sigma$, i.e., the set of all
possible sequences over $\Sigma$.
$|S|$ is the length of string $S$, $S[i], i \geq 0$, denotes its $(i+1)$-th
character, and $S[i\,\ldots\,j]$ denotes its substring between the $(i+1)$-st and the $(j+1)$-st
characters (inclusive).
For any two strings $S$ and $S'$, we say that $S'$ is a suffix of $S$
(in symbols, $S' \sqsupseteq S$) if $S' = S[i\,\ldots\,|S|-1]$, for some
$0\le i < |S|$.

A gapped pattern $P$ is of the form
$$
\gpattern\,,
$$
where $S_i\in\Sigma^*$, $|S_i|\ge 1$, is the $i$-th string (keyword) and $j_i\ge 0$ is the
length of the gap between keywords $S_i$ and $S_{i+1}$, for $i=1,\ldots,\ell$.
We say that $P$ occurs in a string $T$ at ending position $i$
if
$$
T[i-m+1\,\ldots\,i]=S_1\cdot A_1\cdot S_2\cdot\ldots \cdot A_{\ell-1}\cdot S_{\ell}\,,
$$
where $A_i\in \Sigma^*$, $|A_i| = j_i$, for $1\le i\le \ell - 1$, and $m = \sum_{i=1}^{\ell} |S_i| + \sum_{i=1}^{\ell-1} j_i$.
In this case we write $P\sqsupseteq_g T_i$.
We denote by $\len{P} = \sum_{i=1}^{\ell}|S_i|$ and $\klen{P} = \ell$ the number of alphabet
symbols and keywords in $P$, respectively.
The gapped pattern $P_i = \gqprefix$ is the prefix of $P$ of length $i\le \ell$.
Given a set of gapped patterns $\mathcal{P}$, we denote by
$\plen=\sum_{P\in\mathcal{P}}\len{P}$ and
$\pklen=\sum_{P\in\mathcal{P}}\klen{P}$ the total number of
symbols and keywords in the patterns, respectively.

The RAM model is assumed, with words of size $w$ in bits.
We use some bitwise operations following the standard notation
as in the C language: $\&$, $|$, $\sim$, $\ll$ for \texttt{and}, \texttt{or},
\texttt{not} and \texttt{left shift}, respectively.
The function to compute the position of the most significant non-zero bit of a word $x$
is $\floor{\log_2(x)}$.

Given a set $\mathcal{S}$ of strings over a finite alphabet $\Sigma$,
the \emph{trie} $\T$ associated with $\mathcal{S}$ is a rooted
directed tree, whose edges are labeled by single characters of
$\Sigma$, such that
\begin{enumerate}[(i)]
\item distinct edges out of the same node are labeled
by distinct characters,

\item all paths in $\T$ from the root are
labeled by prefixes of the strings in $\mathcal{S}$,

\item for each string $S$ in $\mathcal{S}$ there exists a path in $\T$
from the root which is labeled by $S$.
\end{enumerate}

Let $Q$ denote the set of nodes of $\T$, $\aroot$ the root of $\T$,
and $\alabel(q)$ the string which labels the path from $\aroot$ to
$q$, for any $q\in Q$. The Aho-Corasick
(AC) automaton~\cite{DBLP:journals/cacm/AhoC75} $(Q, \Sigma, \delta,
\aroot, F)$ for the language $\bigcup_{S\in\mathcal{S}}\Sigma^* S$ is
induced directly by the trie $\T$ for $\mathcal{S}$. The set $F$ of
final states include all the states $q$ such that the set $\{
S\in\mathcal{S}\ |\ S\sqsupseteq \alabel(q)\}$ of strings in
$\mathcal{S}$ which are suffixes of $\alabel(q)$ is nonempty. The
transition function $\delta(q,c)$ of the AC automaton is defined as
the unique state $q'$ such that $\alabel(q')$ is the longest suffix of
$\alabel(q)\cdot c$. Let $\fail(q)$ be the unique state $p$ such that
$\alabel(p)$ is the longest proper suffix of $\alabel(q)$, for any
$q\in Q\setminus\{\aroot\}$. Any transition $\delta(q,c)$ can be
recursively computed as
$$
\delta(q,c) =
\begin{cases}
\delta_{\T}(q,c) & \text{if } \delta_{\T}(q,c) \text{ is defined}\,, \\
\delta(\fail(q), c) & \text{if } q\neq\aroot\,, \\
\aroot & \text{otherwise}\,,
\end{cases}
$$
where $\delta_{\T}$ is the transition function of the trie. Given a
string $T$ of length $n$, let $q_{-1} = \aroot$ and
$q_i=\delta(q_{i-1}, T[i])$ be the state of the AC automaton after
reading the prefix $T[0\,\ldots\,i]$ of $T$, for $0\le i < n$. If the
transitions of the trie are indexed using a balanced binary search
tree, the sequence of states $q_0, \ldots, q_{n-1}$, i..e, the
simulation of the AC automaton on $T$, can be computed in time
$O(n\log\sigma)$.

\section{Motivation}\label{sec:motivation}

Given a DNA sequence and a motif that describes the binding
specificities of a given transcription factor, we study the problem of
finding all the binding sites in the sequence that match the motif.
The traditional model used to represent transcription factor motifs is
the Position Weight Matrix (PWM).
This
model assumes that there is no correlation between positions in the
sites, that is, the contribution of a nucleotide at a given position
to the total affinity does not depend on the other nucleotides which
appear in other positions. The problem of matching the locations in
DNA sequences at which a given transcription factor binds to is well
studied under the PWM model~\cite{DBLP:journals/tcs/PizziU08}. Many
more advanced models have been proposed to overcome the independence
assumption of the PWM (see~\cite{Bi11} for a discussion
on the most important ones). One approach, common to some models,
consists in extending the PWM model by assigning weights to sets of
symbol-position pairs rather than to a single pair only. We focus on
the Feature Motif Model (FMM)~\cite{DBLP:journals/ploscb/SharonLS08}
since, to our knowledge, it is the most general one. In this model the
TF binding specificities are described with so-called \emph{features},
i.e., rules that assign a weight to a set of associations between
symbols and positions. Given a DNA sequence, a set of features and a
motif of length $m$, the matching problem consists in computing the score
of each site (substring) of length $m$ in the sequence, where the
score of a site is the sum of the weights of all the features that
occur in the site. Formally, a \emph{feature} can be denoted as
$$
\{(a_1, i_1), \ldots, (a_q, i_q)\}\rightarrow \weight\,,
$$
where $\weight$ is the affinity contribution of the feature and $a_j\in\{A,
C, G, T\}$ is the nucleotide which must occur at position $i_j$, for
$j=1,\ldots, q$ and $1\le i_j\le m$.
It is easy to transform these rules into new rules
where the left side is a gapped pattern: if $i_1 < i_2 < \ldots
< i_q$, we can induce the following gapped pattern rule
$$
\gqrule\rightarrow (i_q, \weight).
$$
Note that we maintain the last position $i_q$ to recover the original feature.
This transformation has the advantage that the resulting pattern is
position independent. Moreover, after this transformation,
different features may share the same gapped pattern. Hence,
the matching problem can be decomposed into two components: the first
component identifies the occurrences of the groups of features by
searching for the corresponding gapped patterns, while the
second component computes the score for each candidate site using the
information provided by the first component.
For a motif of length $m$, the second component can be easily
implemented by maintaining the score for $m$ site alignments
simultaneously with a circular queue of length $m$. Each time a group
of features with an associated set of position/weight pairs
$\{(i_1,\weight_1),\ldots,(i_r,\weight_r)\}$ is found at position $j$
in the sequence, the algorithm adds the weight $\weight_k$ to the
score of the alignment that ends at position $j+m-i_k$ in the
sequence, if $j\ge i_k$.

\section{Dynamic Programming}\label{sec:dp}

In this section we present a method based on dynamic programming (DP)
to search for a set $\mathcal{P}$ of gapped patterns in a text $T$ of
length $n$. Then, in the next two sections, we show how to
parallelize the computation of the DP matrix column-wise and row-wise
using word-level parallelism.
Let $P$ be a gapped pattern.
We define the matrix $D$ of size $\klen{P}\times n$ where
$$
D_{l,i} =
\begin{cases}
1 & \text{if } P_l\sqsupseteq_{g} T_i\,, \\
0 & \text{otherwise}\,,
\end{cases}
$$
for $0\le l < \klen{P}$ and $0\le i < n$.
For example, the matrix corresponding to $P = c\cdot 2\cdot at\cdot 1\cdot t, T = atcgctcatat$ is
\begin{center}
\begin{tabular}{|l|lllllllllll|}
\hline
      & a & t & c & g & c & t & c & a & t & a & t \\
\hline
c & 0 & 0 & 1 & 0 & 1 & 0 & 1 & 0 & 0 & 0 & 0 \\
at & 0 & 0 & 0 & 0 & 0 & 0 & 0 & 0 & 1 & 0 & 1 \\
t & 0 & 0 & 0 & 0 & 0 & 0 & 0 & 0 & 0 & 0 & 1 \\
\hline
\end{tabular}
\end{center}
From the definition of $D$ it follows that the pattern $P$ occurs
in $T$ at position $i$ if and only if $D_{\klen{P},i} = 1$.
The matrix $D$ can be computed using the recurrence
$$
D_{l,i} = 
\begin{cases}
1	& \text{if } S_l\sqsupseteq T[0\,\ldots\,i-1] \text{ and } (l = 1 \text{ or } D_{l-1,i - |S_l| - j_{l-1}} = 1)\,, \\
0	& \text{otherwise}\,.
\end{cases}
$$
Let $D^k$ be the matrix of the $k$-th pattern in $\mathcal{P}$. This
method can be generalized to multiple patterns by concatenating the
matrices $D^k$ for all the patterns into a single matrix $D$ of size
$\klen{\mathcal{P}}\times n$ and adjusting the definitions
accordingly.
We now sketch the intuition behind the column-wise and row-wise parallelization.

Consider a column-wise computation of $D$. If, for each
$P\in\mathcal{P}$, we replace each gap length $j_i$ in $P$ with
$\bar{j}_i = j_i + |S_{i+1}|$, for $i = 1, \ldots, \klen{P} - 1$, and
let $G$ be the set of distinct gap lengths in $\mathcal{P}$, then we
have that each column of $D$ depends on $|G|$ previous columns.
For example, in the case of $c\cdot 2\cdot at\cdot 1\cdot t$, we have
$\bar{j}_1 = 4, \bar{j}_2 = 2$ and the $l$-th column depends on
columns $l-2$ and $l-4$. Instead, in the case of $c\cdot 2\cdot a\cdot
1\cdot at$ we have $\bar{j}_1 = 3, \bar{j}_2 = 3$ and the $l$-th
column depends on column $l-3$ only. The idea in the column-wise
parallelization is to process $w$ cells of a column in $O(\gwspan)$
time, where $1\le \gwspan\le w$ is the maximum number of distinct gap
lengths that span a segment of $w$ cells in a column. The total time
to compute one column ($n$ in total) is thus $O(\gwspan \ceil{\pklen /
  w})$. We also describe how to obtain an equivalent set of patterns
with $O(\log\gsize)$ distinct gap lengths, where $\gsize = \max G -
\min G + 1$, at the price of $O(\log\gsize)$ new keywords per gap,
thus achieving $O(\log^2\gsize \ceil{\pklen / w})$ time.

Consider now a row-wise computation of $D$. We have that each row of
$D$ depends on the previous row only. To perform this computation
efficiently, we split, for each $P\in\mathcal{P}$, each keyword $S_i$
in $P$ in $|S_i|$ unit-length keywords by inserting a $0$ gap length
between each two consecutive symbols. For example, $c\cdot 2\cdot
at\cdot 1\cdot t$ becomes $c\cdot 2\cdot a\cdot 0\cdot t\cdot 1\cdot
t$ and the corresponding matrix is
\begin{center}
\begin{tabular}{|llllllllllll|}
\hline
      & a & t & c & g & c & t & c & a & t & a & t \\
c & 0 & 0 & 1 & 0 & 1 & 0 & 1 & 0 & 0 & 0 & 0 \\
a & 0 & 0 & 0 & 0 & 0 & 0 & 0 & 1 & 0 & 1 & 0 \\
t & 0 & 0 & 0 & 0 & 0 & 0 & 0 & 0 & 1 & 0 & 1 \\
t & 0 & 0 & 0 & 0 & 0 & 0 & 0 & 0 & 0 & 0 & 1 \\
\hline
\end{tabular}
\end{center}
In this way the number of rows becomes $\plen$. Then, the idea in
the row-wise parallelization is to process $w$ cells of a row in
$O(1)$ time. The total time to compute one row ($\plen$ in total) is
thus $O(\ceil{n/w})$.

\section{Column-wise parallelization}\label{sec:gq-matcher}

Let $P^k$ be the $k$-th pattern in $\mathcal{P}$.
We adopt the superscript notation for $S_i$, $j_i$ and $P_l$ with the same meaning.
We define the set
$$
D_i = \{ (k,l)\ |\ P_l^k\sqsupseteq_{g} T_i \}\,,
$$
of the prefixes of the patterns that occur at position $i$ in $T$, for
$i=0,\ldots,n-1$, $1\le k \le |\mathcal{P}|$ and $1\le l\le
\klen{P^k}$.
The set $D_i$ is a sparse representation of the $i$-th column of the matrix $D$ defined in the previous section.
From the definition of $D_i$ it follows that the pattern $P^k$ occurs
in $T$ at position $i$ if and only if $(k,\klen{P^k})\in D_i$.
For example, if $T = accgtaaacg$ and $\mathcal{P} = \{ cgt\cdot 2\cdot
ac, c\cdot 1\cdot gt\cdot 3\cdot c\}$, we have $D_1 = \{ (2,1) \}$,
$D_4 = \{ (1, 1), (2, 2) \}$ and $D_8 = \{ (1, 2), (2,1), (2,3) \}$
and there is an occurrence of $P^1$ and $P^2$ at position $8$.

Let $\mathcal{K} = \{1,\ldots,\pklen\}$ be the set of indices of the
keywords in $\mathcal{P}$ and let $\bar{T}_i\subseteq \mathcal{K}$ be
the set of indices of the matching keywords in $T$ ending at position
$i$. The sequence $\bar{T}_i$, for $0\le i< n$, is basically a new
text with character classes over $\mathcal{K}$. In the case of the
previous example we have $\mathcal{K} = \{ cgt_1, ac_2, c_3, gt_4, c_5
\}$ and $\bar{T}_1 = \{ ac_2, c_3, c_5\}$, $\bar{T}_4 = \{cgt_1, gt_4\}$
and $\bar{T}_8 = \{ ac_2, c_3, c_5 \}$ (we also show the keyword
corresponding to each index for clarity).

We replace each pattern $\gpattern$ in $\mathcal{P}$ with the pattern
$
\gkpattern\,,
$
with unit-length keywords over the alphabet $\mathcal{K}$, where
$\bar{S}_i\in \mathcal{K}$ and $\bar{j}_i=j_i+|S_{i+1}|$, for $1\le
i< \ell$.
For $\mathcal{P} = \{ cgt\cdot 2\cdot ac, c\cdot 1\cdot gt\cdot 3\cdot
c \}$, the new set is $\{ cgt_1\cdot 4\cdot ac_2, c_3\cdot 3\cdot
gt_4\cdot 4\cdot c_5 \}$.

The sets $D_i$ can be computed using the following lemma:
\begin{lemma}
Let $\mathcal{P}$ and $T$ be a set of gapped patterns and a text of length $n$, respectively.
Then $(k,l) \in D_i$, for $1\le k\le |\mathcal{P}|$, $1\le l\le \klen{P^k}$ and
$i=0,\ldots,n-1$, if and only if
$$
(l=1 \text{ or } (k,l-1)\in D_{i-\bar{j}_{l-1}^k}) \text{ and } \bar{S}_l^k\in \bar{T}_i.
$$
\end{lemma}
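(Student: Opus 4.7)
The plan is to prove the biconditional by unfolding the definitions of $D_i$, $\bar{T}_i$, and $\bar{j}_{l-1}^k$, and showing that a match of $P_l^k$ ending at position $i$ decomposes uniquely into a match of the proper prefix $P_{l-1}^k$ ending at a shifted position plus a match of the final keyword $S_l^k$ ending at $i$. The crucial quantitative observation is that $\bar{j}_{l-1}^k = j_{l-1}^k + |S_l^k|$ is exactly the distance between the end positions of two consecutive keyword matches inside any occurrence of $P_l^k$, which is what makes the recurrence align with the definition.

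For the forward direction, assume $(k,l)\in D_i$, so by definition there exist $A_1,\ldots,A_{l-1}\in\Sigma^*$ with $|A_j|=j_j^k$ such that $T[i-m+1\,\ldots\,i]=S_1^k\cdot A_1\cdots A_{l-1}\cdot S_l^k$, with $m=\sum_{j=1}^l|S_j^k|+\sum_{j=1}^{l-1}j_j^k$. Since $S_l^k$ is a suffix of $T[0\,\ldots\,i]$, its index $\bar{S}_l^k$ belongs to $\bar{T}_i$. If $l=1$ the condition holds trivially; otherwise the same decomposition witnesses a match of $P_{l-1}^k$ ending at position $i-|S_l^k|-j_{l-1}^k$, which by definition of $\bar{j}_{l-1}^k$ is exactly $i-\bar{j}_{l-1}^k$, so $(k,l-1)\in D_{i-\bar{j}_{l-1}^k}$.

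For the converse, suppose $\bar{S}_l^k\in \bar{T}_i$, so $S_l^k$ is a suffix of $T[0\,\ldots\,i]$. If $l=1$ this immediately gives $P_1^k\sqsupseteq_g T_i$. Otherwise, from $(k,l-1)\in D_{i-\bar{j}_{l-1}^k}$ we obtain strings $A_1,\ldots,A_{l-2}$ of the prescribed gap lengths witnessing $P_{l-1}^k\sqsupseteq_g T_{i-\bar{j}_{l-1}^k}$; taking $A_{l-1}$ to be the $j_{l-1}^k$ characters of $T$ strictly between the end of the $(l-1)$-th keyword match (at position $i-\bar{j}_{l-1}^k$) and the start of $S_l^k$ (at position $i-|S_l^k|+1$) produces a full decomposition witnessing $P_l^k\sqsupseteq_g T_i$. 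The proof is thus entirely a matter of index bookkeeping; the only subtle point is verifying that the shift $\bar{j}_{l-1}^k$ is the correct one, which follows immediately from its definition, so there is no real obstacle beyond careful notation.
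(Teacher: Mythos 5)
Your argument is correct: both directions follow by unfolding the definitions of $D_i$, $\bar{T}_i$ and $\bar{j}_{l-1}^k$, and your key observation that $\bar{j}_{l-1}^k = j_{l-1}^k + |S_l^k|$ measures the distance between ending positions of consecutive keyword matches is precisely the point the lemma rests on. The paper states this lemma without proof, treating it as immediate from the definitions, and your proof is the straightforward verification it implicitly relies on, so there is nothing to add.
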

The idea is to match the transformed patterns against the text
$\bar{T}$. Let $\gmin$ and $\gmax$ denote the minimum and maximum gap
length in the patterns, respectively. We also denote with $\gsize =
\gmax-\gmin+1$ the size of the variation range of the gap lengths.
We now present how to efficiently compute any column $D_i$ using Lemma $1$ and word-level parallelism.

\begin{figure}[!t]
\begin{center}
\begin{footnotesize}
\begin{minipage}[c]{0.495\textwidth}
\begin{tabular}{|rl|}
  \hline
  \multicolumn{2}{|l|}{\gqalgoC-preprocess ($\mathcal{P}$, $T$)}\\[0.2cm]
  \setcounter{instr}{0}
  \ninstr & $(\delta, \aroot, \bpstyle{B}, \ofail)\leftarrow AC(\mathcal{P})$ \\
  \ninstr & $G\leftarrow \emptyset $ \\
  \ninstr & $m\leftarrow \pklen$ \\
  \ninstr & $\bpstyle{I}\leftarrow 0^m, \bpstyle{M}\leftarrow 0^m$ \\
  \ninstr & \textbf{for} $g=0,\ldots,\gmax$ \textbf{do} $\bpstyle{C}(g)\leftarrow 0^m$ \\
  \ninstr & $l\leftarrow 0$ \\
  \ninstr & \textbf{for} $\gpattern\in \mathcal{P}$ \textbf{do} \\
  \ninstr & \qquad $\bpstyle{I}\leftarrow \bpstyle{I}\OR 1\ll l$ \\
  \ninstr & \qquad \textbf{for} $k=1,\ldots,\ell$ \textbf{do} \\
  \ninstr & \qquad \qquad \textbf{if} $k = \ell$ \textbf{then} \\
  \ninstr & \qquad \qquad \qquad $\bpstyle{M}\leftarrow \bpstyle{M}\OR 1\ll l $ \\
  \ninstr & \qquad \qquad \textbf{else} $g\leftarrow j_{k} + |S_{k+1}|$\\
  \ninstr & \qquad \qquad \qquad $\bpstyle{C}(g)\leftarrow \bpstyle{C}(g)\OR 1\ll l$ \\
  \ninstr & \qquad \qquad \qquad $G\leftarrow G\cup \{ g \}$ \\
  \ninstr & \qquad \qquad $l\leftarrow l + 1$\\
  & \\
  \hline
  \end{tabular}
\end{minipage}
\hfill
\begin{minipage}[c]{0.475\textwidth}
\begin{tabular}{|rl|}
  \hline
  \multicolumn{2}{|l|}{\gqalgoC-search ($\mathcal{P}$, $T$)}\\[0.2cm]
  \setcounter{instr}{0}
  \ninstr & $q\leftarrow \aroot$\\
  \ninstr & \textbf{for} $i=0,\ldots,|T|-1$ \textbf{do} \\
  \ninstr & \qquad $q\leftarrow \delta(q, T[i]), \bpstyle{H}\leftarrow 0^m$ \\
  \ninstr & \qquad \textbf{for} $g\in G$ \textbf{do} \\
  \ninstr & \qquad \qquad $\bpstyle{H}\leftarrow \bpstyle{H}\OR (\bpstyle{D}_{i-g}\AND \bpstyle{C}(g))$ \\
  \ninstr & \qquad $\bpstyle{D}_i\leftarrow ((\bpstyle{H}\ll 1)\OR \bpstyle{I})\AND \bpstyle{B}(\ofail(q))$ \\
  \ninstr & \qquad $\bpstyle{H}\leftarrow \bpstyle{D}_i\AND \bpstyle{M}$ \\
  \ninstr & \qquad {\sc report}($\bpstyle{H}$) \\
  & \\
  \hline
  & \\[0.14cm]
  \multicolumn{2}{|l|}{\textsc{report}($\bpstyle{H}$)}\\[0.2cm]
  \setcounter{instr}{0}
  \ninstr & \textbf{while} $\bpstyle{H}\neq 0^m$ \textbf{do} \\
  \ninstr & \qquad $k\leftarrow \floor{\log_2(\bpstyle{H})}$ \\
  \ninstr & \qquad \textbf{report}($k$) \\
  \ninstr & \qquad $\bpstyle{H}\leftarrow \bpstyle{H}\AND \NOT(1\ll k)$ \\
  \hline
\end{tabular}
\end{minipage}
\end{footnotesize}
\end{center}
\caption{The \gqalgoC algorithm.}
\label{algo:gq3}
\end{figure}

Let $Q$ denote the set of states of the AC automaton for the set of distinct keywords in $\mathcal{P}$.
We store for each state $q$ a pointer $\ofail(q)$ to the state
$q'$ such that $\alabel(q')$ is the longest suffix of $\alabel(q)$
that is also a keyword, if any. Let
$$
B(q)=\{ (k,l)\ |\ S_l^k \sqsupseteq \alabel(q) \}
$$
be the set of all the occurrences of keywords in the patterns in $\mathcal{P}$ that are suffixes of $\alabel(q)$, for any $q\in Q$.
We preprocess $B(q)$ for each state $q$ such that $\alabel(q)$ is a keyword and compute it for any other state using $B(\ofail(q))$.
The sets $B$ can be preprocessed as follows: each time we add to the
AC automaton a keyword with index $(k,l)$ and corresponding state $q$,
we first initialize $B(q)$ to $\emptyset$, if $q$ is created during
the insertion of this keyword, and then add $(k,l)$ to $B(q)$. After
the AC automaton is built, we perform a breadth-first traversal of the
states of the automaton, and for each state $q$ visited such that $\alabel(q)$ is a keyword we set $B(q) =
B(q)\cup B(f_o(q))$.
It is not hard to see that $B(\ofail(q_i))$ encodes the set
$\bar{T}_i$, where $q_i$ is the state of the AC automaton after
reading the prefix $T[0\,\ldots\,i]$ of $T$.

We describe next how to compute any set $D_i$ using word-level parallelism.
Let $G$ be the set of all the distinct gap lengths in the patterns.
In addition to the sets $B(q)$, we preprocess also a set $C(g)$, for each $g\in G$, defined as follows:
$$
C(g) = \{(k,l)\ |\ \bar{j}_{l}^k = g\}\,,
$$
for $1\le k\le |\mathcal{P}|$ and $1\le l< \klen{P^k}$.
For example, for the set $\{ cgt_1\cdot 4\cdot ac_2, c_3\cdot 3\cdot
gt_4\cdot 4\cdot c_3 \}$ we have $C(4) = \{ (1, 1), (2,2)\}$ and $C(3) = \{ (2,1) \}$.
We encode the sets $D_i$, $B(q)$ and $C(g)$ as bit-vectors of $\pklen$ bits.
The generic element $(k,l)$ is mapped onto bit $\sum_{i=1}^{k-1}
\klen{P^i} + \klen{P^k_{l-1}}$, where $\klen{P^k_0} = 0$ for any $k$.
We denote with $\bpstyle{D}_i$,
$\bpstyle{B}(q)$ and $\bpstyle{C}(g)$ the bit-vectors representing
the sets $D_i$, $B(q)$ and $C(g)$, respectively. We also compute two additional
bit-vectors $\bpstyle{I}$ and $\bpstyle{M}$, such that the bit
corresponding to the element $(k,1)$ in $\bpstyle{I}$ and $(k,\klen{P^k})$ in $\bpstyle{M}$ is set
to $1$, for $1\le k\le |\mathcal{P}|$. We basically mark the first and the last bit of each pattern, respectively.
Let $\bpstyle{H}_i$ be the bit-vector equal to the bitwise \texttt{or} of the bit-vectors
\begin{equation}\label{eq:h}
\bpstyle{D}_{i-g}\AND \bpstyle{C}(g)\,,
\end{equation}
for each $g\in G$.
Then the corresponding set $H_i$ is equal to
$$
\bigcup_{g\in G}\{ (k,l)\ |\ (k,l)\in D_{i-g}\wedge \bar{j}_{l}^k = g \}\,.
$$
The bit-vector $D_i$ can then be computed using the following bitwise operations:
$$
\begin{array}{rcl}
    \bpstyle{D}_i &\leftarrow& ((\bpstyle{H}_i\ll 1)\OR \bpstyle{I})\AND \bpstyle{B}(\ofail(q_i)) \\
\end{array}
$$
which correspond to the relation
$$
\{ (k,l)\ |\ ((k,l-1)\in H_i \vee l=1) \wedge (k,l)\in B(\ofail(q_i))\}\,.
$$
To report all the patterns that match at position $i$ it is enough to
iterate over all the bits set in $\bpstyle{D}_i\AND \bpstyle{M}$.
The algorithm, named \gqalgoC, is given in Figure~\ref{algo:gq3}.

The bit-vector $\bpstyle{H}_i$ can be constructed in time
$O(\gwspan\ceil{\pklen / w})$, $1\le \gwspan\le w$, as follows: we
compute Equation~\ref{eq:h} for each word of the bit-vector
separately, starting from the least significant one. For a given word
with index $j$, we have to compute equation~\ref{eq:h} only for each
$g\in G$ such that the $j$-th word of $C(g)$ has at least one bit set.
Each position in the bit-vector is spanned by exactly one gap, so the
number of such $g$ is at most $w$. Hence, if we maintain, for each
index $j$, the list $G_j$ of all the distinct gap lengths that span
the positions of the $j$-th word, we can compute $\bpstyle{H}_i$ in
time $\sum_{j=1}^{\ceil{\pklen / w}}|G_j|$, which yields the
advertised bound by replacing $|G_j|$ with $\gwspan = \max_j |G_j|$.

The bit-vectors $\bpstyle{B}(f_o(q_i))$ encoding the sets $\bar{T}_i$,
for $0\le i < n$, can be computed in $O(n\log\sigma)$ time using the
AC automaton for the set of distinct keywords in $\mathcal{P}$. Given
the bit-vectors $\bpstyle{H}_i$ and $\bpstyle{B}(f_o(q_i))$, the
bit-vector $\bpstyle{D}_i$ can be computed in $O(\ceil{\pklen / w})$
time. The time complexity of the searching phase of the algorithm is
then $O(n(\log\sigma + \gwspan \ceil{\pklen / w}) + occ)$.

The AC automaton requires $\Theta(\plen)$ space.
Moreover, for the recursion of Lemma $1$, the algorithm needs to keep
the sets $D$ computed in the last $\gmax$ iterations. The lists $G_j$
require $O(\pklen + \ceil{\pklen / w})$ space in total. Finally, the
number of $B$ sets (which corresponds to the number of distinct
keywords) is $\le \pklen$ while the number of $C$ sets is $\le \gmax$.
Hence, the space complexity is $O(\plen + (\gmax + \pklen)\ceil{\pklen
  / w})$.

Observe that the size of the sets $G_j$ depends also on the ordering
of the patterns (unless $\klen{P}$ is a multiple of $w$ for each
$P\in\mathcal{P}$), since more than one pattern can be packed into the
same word. Hence, it can be possibly reduced by finding an ordering
that maps onto the same word patterns that share many gap lengths. We
now show that the problem of minimizing $\sum_j |G_j|$ is hard. In
order to formally define the problem, we introduce the following
definition:

\newcommand{\tablereduction}{
\renewcommand{\tabcolsep}{0.4mm}
\small
\begin{tabular}{rrrrr|rrrr|rrrr|rrrr}
$L_1$ = & \textbf{1} & \textbf{2} & 5 & 6 & \textbf{1} & \textbf{2} & 5 & 6 & \textbf{1} & \textbf{2} & 5 & 6 & \textbf{1} & \textbf{2} & 5 & 6\\
$L_2$ = & \textbf{1} & \textbf{3} & 7 & 8 & \textbf{1} & \textbf{3} & 7 & 8 & \textbf{1} & \textbf{3} & 7 & 8 & \textbf{1} & \textbf{3} & 7 & 8\\
$L_3$ = & \textbf{2} & \textbf{3} & \textbf{4} & 9 & \textbf{2} & \textbf{3} & \textbf{4} & 9 & \textbf{2} & \textbf{3} & \textbf{4} & 9 & \textbf{2} & \textbf{3} & \textbf{4} & 9\\
$L_4$ = & \ \textbf{4} & 10 & 11 & 12 & \ \textbf{4} & 10 & 11 & 12 & \ \textbf{4} & 10 & 11 & 12 & \ \textbf{4} & 10 & 11 & 12
\end{tabular}
}

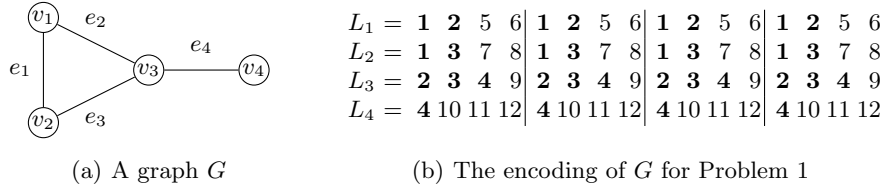
\begin{figure}
\subfigure[A graph $G$]{
\begin{tikzpicture}[scale=.7,-]
\draw[clip,draw=none] (-1,-1.5) rectangle (5,1.5);
\node[circle,draw=black,inner sep = -2mm,minimum size = 4mm,fill=white] (v1) at (0,1) {$v_1$};
\node[circle,draw=black,inner sep = -2mm,minimum size = 4mm,fill=white] (v2) at (0,-1) {$v_2$};
\node[circle,draw=black,inner sep = -2mm,minimum size = 4mm,fill=white] (v3) at (2,0) {$v_3$};
\node[circle,draw=black,inner sep = -2mm,minimum size = 4mm,fill=white] (v4) at (4,0) {$v_4$};

\draw (v1) to node[draw=none,fill=none,left] {$e_1$} (v2);
\draw (v1) to node[draw=none,fill=none,above] {$e_2$} (v3);
\draw (v2) to node[draw=none,fill=none,below] {$e_3$} (v3);
\draw (v3) to node[draw=none,fill=none,above] {$e_4$} (v4);
\end{tikzpicture}
}
\subfigure[The encoding of $G$ for Problem~\ref{problem-nphard}]{
{\raisebox{10mm}\tablereduction}
}
\caption{The reduction of the Hamiltonian Path Problem to Problem~\ref{problem-nphard}. The encoding of the graph $G$ with $n = 4$ vertices and $m = 4$ edges has $U = \{1,\dots,m\} \cup \{m+1,\dots,n^2-m\}$, and for every vertex $v_i$, there is a list $L_i$ made up of $n$ copies of a sublist of length $m$ consisting of the indices of its incident edges plus some unique symbols from $\{m+1,\dots,n^2-m\}$; we take $b = (n+1)m$ and $M = (2m - 1)(n-1) + m$.
\label{fig-reduction}}
\end{figure}

\begin{definition}
Let $L_1, L_2, \ldots, L_n$ be a sequence of lists of integers and let $L_c$ be
the list resulting from their concatenation, say $L_c = l_1,\dots,l_{|L_c|}$. For a given integer $b$, we define the $b$-mapping of the lists as the sequence of lists
$L_1^{b},L_2^{b},\ldots,L_r^{b}$ where $r=\ceil{|L_c|/b}$, list
$L^{b}_i$ contains the elements $l_{(i-1)b + 1},l_{(i-1)b + 2},\ldots,l_{(i-1)b + b}$ of $L_c$, for $1\le i \leq \floor{|L_c|/b}$, and, if $r > \floor{|L_c|/b}$, list $L^{b}_r$ contains the elements $l_{(r-1)b + 1},l_{(r-1)b + 2},\ldots,l_{(r-1)b + (|L_c|\bmod b)}$.
\end{definition}

Then, the problem of minimizing $\sum_j |G_j|$ can be stated as (where in our case we have $n=|\mathcal{P}|$, $b=w$, $U=G$ and $L_k = j^k_1,j^k_2,\ldots,j^k_{\klen{P^k}}$, for $1\le k\le |\mathcal{P}|$):

\begin{problem}[Permutation with Minimum Distinct Binned Symbols, PMDBS]
Given a sequence of $n$ lists of integers $L_1, L_2, \ldots, L_n$ over a
universe $U$, and an integer $b$, find the permutation $\pi$ of
$1,\ldots,n$ which minimizes the sum, over all lists $L^b$ in the $b$-mapping of $L_{\pi(1)},\dots,L_{\pi(n)}$, of the number of distinct elements in $L^b$.
\label{problem-nphard}
\end{problem}

We claim that problem PMDBS is intractable (the full proof is in the Appendix):

\begin{theorem}
Problem PMDBS is NP-hard in the strong sense.
\end{theorem}
\begin{oproof}
We reduce from the Hamiltonian Path Problem (see \cite{Garey-Johnson} for basic notions and definitions). In the decision version of the Problem~PMDBS, we ask for a permutation $\pi$ of $1,\ldots,n$ such that the sum, over all lists $L^b$ in the $b$-mapping of $L_{\pi(1)},\dots,L_{\pi(n)}$, of the number of distinct elements in $L^b$ is at most a given number $M$. 

The idea behind our reduction is that, given a graph $G$ with $n$
vertices, the vertices of $G$ will be encoded by lists, where the list
of a vertex consists of the indices of the edges incident to it, under
a suitable encoding (see Fig.~\ref{fig-reduction} for an example).
This encoding will be such that, choosing $M$ suitably, a permutation
of $1,\dots,n$ satisfying the bound $M$ corresponds to a Hamiltonian
Path in $G$ and \emph{vice versa}. \qed
\end{oproof}

We now show how to improve the time complexity in the worst-case by
constructing an equivalent set of patterns with $O(\log\gsize)$
distinct gap lengths.
Given a set $S\subset\mathbb{N}$, a set $X\subset\mathbb{N}$ is a
$\gamma$-generating set of $S$ if every element of $S$ can be
expressed as the sum of at most $\gamma$, non necessarily distinct,
elements of $X$. Suppose that $X$ is a $\gamma$-generating set of $G$.
We augment the alphabet $\Sigma$ with a wildcard symbol $*$ that
matches any symbol of the original alphabet and define the function
$$
\phi(g) = (i_1 - 1)\cdot *\cdot (i_2 - 1)\cdot *\cdot \ldots \cdot(i_{l-1}-1)\cdot*\cdot i_l\,,
$$
for $g\in G$, where $\{ i_1, i_2, \ldots, i_l\}$ is an arbitrary
combination with repetitions from $X$ of size $l\le \gamma$ which
generate $g$, i.e., $\sum_{j=1}^{l} i_j = g$. The function $\phi$ maps
a gap length $g$ onto a concatenation of $l$ gap lengths from the set
$X\cup\{ i-1\ |\ i\in X\}$ and $l-1$ wildcard symbols. For example, if
$G = \{ 1, 2, 5, 6, 10\}$ then $X = \{1, 5\}$ is a $2$-generating set of $G$ and
$$
\begin{array}{l}
\phi(1) = 1 \\
\phi(2) = \phi(1 + 1) = 0\cdot *\cdot 1 \\
\phi(5) = 5 \\
\phi(6) = \phi(1 + 5) = 0\cdot *\cdot 5 \\
\phi(10) = \phi(5 + 5) = 4\cdot *\cdot 5 \\
\end{array}
$$
We generate a new set of
patterns $\mathcal{P}'$ from $\mathcal{P}$, by transforming each
pattern $\gkpattern$ in $\mathcal{P}$ into the equivalent pattern
$$
\bar{S}_1\cdot \phi(\bar{j}_1)\cdot \bar{S}_2\cdot \ldots\cdot \phi(\bar{j}_{\ell-1})\cdot \bar{S}_{\ell}\,.
$$
In the next subsection we describe how to extend the algorithm
presented above to support character classes and therefore also
wildcard symbols, since a wildcard is equivalent to a character class
containing all the symbols in $\Sigma$. By definition of $\phi$ we
have that $\klen{\mathcal{P}'} < \gamma\pklen$, since the number of
gaps that are split is at most $\pklen - |\mathcal{P}|$ and the number
of wildcard symbols that are added per gap is at most $\gamma - 1$. The
number of words needed for a bit-vector is then $< \ceil{\gamma\pklen
  / w}\le \gamma\ceil{\pklen / w}$. Moreover, the set $G'$ of distinct
gap lengths in $\mathcal{P}'$ is contained in $X\cup\{i-1\ |\ i\in X\}$ and so its
cardinality is $O(|X|)$.
This construction thus yields a
$O(n(\log\sigma + |X|\gamma\ceil{\pklen / w}) + occ)$ bound, which depends on the generating set used.

W.l.o.g. we assume that $\gmax$ is a power of two (if it is not, we
round it up to the nearest power of two). Any positive integer $g\le
\gmax$ can be expressed as a sum of distinct positive powers of two,
i.e., the binary encoding of $g$, such that the largest power of two
is $\le 2^{\log\gmax}$. This implies that the set $X = \{0\}\cup\{ 2^i\ |\ 0\le
i\le \log\gmax \}$ is a $(\log\gmax + 1)$-generating set of
$G$ (we include $0$ in $X$ because $G$ may contain $0$). For
example, if $G = \{ 1, 2, 5, 6, 10\}$ then $X = \{ 2^i\ |\ 0\le i\le
3\}$ and
$$
\begin{array}{l}
\phi(1) = 1 \\
\phi(2) = 2 \\
\phi(5) = \phi(2^0 + 2^2) = 0\cdot *\cdot 4 \\
\phi(6) = \phi(2^1 + 2^2) = 1\cdot *\cdot 4 \\
\phi(10) = \phi(2^1 + 2^3) = 1\cdot *\cdot 8 \\
\end{array}
$$
This generating set yields a $\log^2\gmax$ factor in the bound, since
$|X|=\log\gmax+2$ and $\gamma=\log\gmax+1$. We now show how to
further improve the bound. Any integer $\gmin\le g \le \gmax$ can be
written as $\gmin + g'$, where $0\le g'\le \gsize$. Hence, based on
the reasoning above, the set $\{\gmin\}\cup\{ 2^i\ |\ 0\le i\le
\log\gsize \}$ is a $(\log\gsize + 2)$-generating set of $G$. We thus
obtain the following result:

\begin{theorem}
  Given a set $\mathcal{P}$ of gapped patterns and a text $T$ of
  length $n$, all the occurrences in $T$ of the patterns in
  $\mathcal{P}$ can be reported in time $O(n(\log\sigma +
  \log^2\gsize\ceil{\pklen / w}) + occ)$.
\end{theorem}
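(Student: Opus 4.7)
The plan is to combine the column-wise parallelization algorithm \gqalgoC with the pattern-transformation trick built around $\phi$ and a carefully chosen generating set. The starting point is the bound $O(n(\log\sigma + |X|\gamma\ceil{\pklen/w})+occ)$ derived right before the theorem, which was obtained by running \gqalgoC (with character-class support, as indicated in the preceding subsection) on the transformed set $\mathcal{P}'$ produced via the $\phi$-rewriting of gaps using a $\gamma$-generating set $X$ of $G$.

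First I would fix the generating set $X=\{\gmin\}\cup\{2^i : 0\le i\le \log\gsize\}$, w.l.o.g.\ assuming $\gsize$ is a power of two (else round up, losing only constants). Then I would verify two properties: (i) $|X|=\log\gsize+2=O(\log\gsize)$, and (ii) $X$ is a $\gamma$-generating set of $G$ with $\gamma=\log\gsize+2$. Property (ii) follows because every $g\in G$ satisfies $\gmin\le g\le \gmax$, hence $g=\gmin+g'$ with $0\le g'\le \gsize$, and $g'$ admits a binary expansion as the sum of at most $\log\gsize+1$ distinct powers of two from $\{2^0,\ldots,2^{\log\gsize}\}\subseteq X$; adding the single term $\gmin\in X$ yields a representation of size $\le \gamma$.

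Next I would invoke the construction of $\mathcal{P}'$ from $\mathcal{P}$ via $\phi$ exactly as described in the excerpt. Two bookkeeping bounds are needed: the number of keywords of $\mathcal{P}'$ satisfies $\klen{\mathcal{P}'}<\gamma\pklen$, so each bit-vector occupies $\le\gamma\ceil{\pklen/w}$ words; and the distinct gap lengths of $\mathcal{P}'$ lie in $X\cup\{i-1:i\in X\}$, so $|G'|\le 2|X|=O(\log\gsize)$. By the previously established bound, searching with \gqalgoC on $\mathcal{P}'$ against $T$ therefore runs in
\[
O\!\left(n\!\left(\log\sigma + |G'|\cdot \gamma\!\left\lceil\frac{\pklen}{w}\right\rceil\right)+occ\right)=O\!\left(n\!\left(\log\sigma+\log^2\!\gsize\!\left\lceil\frac{\pklen}{w}\right\rceil\right)+occ\right).
\]
Correctness of this running time follows from the equivalence between $\mathcal{P}$ and $\mathcal{P}'$: each $\phi(\bar{j})$ is a concatenation of (possibly-zero) fixed gap lengths separated by wildcards summing exactly to $\bar{j}$, so a match of a rewritten pattern in $T$ is in one-to-one correspondence with a match of the original pattern, preserving both positions and count.

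The only non-routine obstacle is ensuring that the character-class (wildcard) extension of \gqalgoC does not inflate the bound. I would appeal to the claim preceding the theorem that wildcards can be supported "with no overhead'' by extending the AC automaton for the keywords in $\mathcal{P}'$ so that each unit-length keyword using the wildcard $*$ matches any symbol at that position; the bit-vector operations in the search loop are unchanged, and the $\log\sigma$ term in the cost is still charged only to the AC-simulation on $T$. Given that, substituting the values of $|X|$ and $\gamma$ into the generic bound yields the theorem.
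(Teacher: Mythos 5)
Your proposal is correct and follows essentially the same route as the paper: take the $(\log\gsize+2)$-generating set $\{\gmin\}\cup\{2^i : 0\le i\le\log\gsize\}$, apply the $\phi$-rewriting to get $\mathcal{P}'$, and plug $|X|,\gamma=O(\log\gsize)$ into the previously derived $O(n(\log\sigma + |X|\gamma\ceil{\pklen/w})+occ)$ bound, with wildcards handled by the character-class extension of \gqalgoC. The only cosmetic difference is that you spell out the pattern-equivalence and overhead checks that the paper leaves implicit.
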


\subsection{Character classes}

In this subsection we describe how to extend the $\gqalgoC$ algorithm
to support character classes in the patterns. Let
$$
\gcpattern
$$
be a gapped pattern with character classes, where the keyword $X_i$ is
either a string or a character class, i.e., a subset of $\Sigma$. We
again replace each pattern $\gcpattern$ with the pattern $\gckpattern$
with unit-length keywords over the alphabet $\{1,\ldots,\pklen\}$,
where $\bar{j}_i = j_i$ if $X_i$ is a character class. Let
$\mathcal{S}_i$ be the set including $X_i$ itself if $X_i$ is a string
and all the symbols in $X_i$ otherwise. A keyword $X_i$ matches in $T$
at ending position $i$, i.e., $\bar{X}_i\in\bar{T}_i$, if there is a
string $S\in\mathcal{S}_i$ such that $S\sqsupseteq T[0\,\ldots\,i]$.
Observe that Lemma $1$ can be used as it is. We build the AC automaton
for the set $\bigcup \mathcal{S}^k_l$, for $1\le k\le |\mathcal{P}|$
and $1\le l\le \klen{P^k}$. To support this generalized pattern it is
enough to change the definition of the sets $B(q)$ as follows:
$$
B(q)=\{ (k,l)\ |\ \exists\, S\in\mathcal{S}_l^k: S \sqsupseteq \alabel(q) \}\,.
$$
Note that all the strings in a given set $\mathcal{S}_l^k$ are mapped
onto the same index $(k,l)$. The algorithm (including the computation
of the sets $B(q)$) does not require any change. Since we add $\sigma$
distinct strings at most in total for the character classes, the
number of $B$ sets is $\le \pklen + \sigma$ and thus we have an
$O(\sigma\ceil{\pklen / w})$ overhead in the preprocessing time and
space complexity.

\section{Row-wise parallelization} \label{sec:gq-matcher-t}

We now describe the row-wise parallelization of the DP matrix, based on the ideas of the
$(\delta,\alpha)$-matching algorithm described in
\cite{DBLP:conf/wea/FredrikssonG06}. This algorithm works for a single
pattern only, thus to solve the multi-pattern case we need to
run (the search phase of) the algorithm several times. In this
algorithm we take a different approach to handle arbitrary length
keywords. In particular, we first transform each pattern $\gpattern$ in $\mathcal{P}$ into the equivalent pattern
$\psi(S_1)\cdot j_1\cdot \psi(S_2)\cdot \ldots\cdot j_{\ell-1}\cdot \psi(S_\ell)$, where
$$
\psi(S) =
\begin{cases}
S[0]\cdot 0\cdot \psi(S[1\,\ldots\,|S|-1]) & \text{ if } |S| > 1\,, \\
S[0] & \text{otherwise}\,, \\
\end{cases}
$$
so that all the keywords have unit length and the number of keywords is $\plen$.
We denote by $p^k_r$ the $r$-th keyword (symbol) of the $k$-th pattern.
We also parallelize over the text, rather than over the set of patterns.
The main benefit is that now there is only one gap length to consider
at each step. This also means that instead of preprocessing the set of
patterns, we now must preprocess the text.
For the same reason the algorithm is not strictly on-line anymore, as it processes 
the text $w$ characters at a time.

Let $D^k$ be the matrix as defined in Section~\ref{sec:dp} for the $k$-th pattern in $\mathcal{P}$ and
let $D^k_{r,c}$ be the cell of $D^k$ at row $r$ and column $c$.
Observe that in the case of unit-length keywords the recurrence to compute $D^k$ simplifies to
$$
D^k_{r,c} = 
\begin{cases}
1	& \text{if } p^k_r = T[c] \text{ and } (r = 1 \text{ or } D^k_{r-1,c-j^k_{r-1}-1} = 1)\,, \\
0	& \text{otherwise}. 
\end{cases}
$$
The matrix $D^k$ has $\len{P^k}$ rows and $n$ columns and is easy to
compute in $O(n\;\len{P^k})$ time using dynamic programming. We now
show how it can be computed in $O(\ceil{n/w}\;\len{P^k})$ time using
word-level parallelism by processing chunks of $w$ columns in $O(1)$
time.

\setcounter{instr}{0}
\begin{figure}[!t]
\begin{center}
\begin{footnotesize}
\begin{tabular}{|rl|}
  \hline
  \multicolumn{2}{|l|}{\gqalgoD ($\mathcal{P}$, $T$)}\\[0.2cm]
  \ninstr & \textbf{for} $s \in \Sigma$ \textbf{do} $\bpstyle{V}[s] \leftarrow 0$ \\
  \ninstr & \textbf{for} $c \leftarrow 0$ \textbf{to} $\ceil{n/w}$ \textbf{do} \\
  \ninstr & \qquad \textbf{for} $i \leftarrow cw$ \textbf{to} $\min(n, (c+1)w)-1$ \textbf{do} $\bpstyle{V}[T[i]] \leftarrow \bpstyle{V}[T[i]] \OR (1 \ll (i \MOD w))$\\
  \ninstr & \qquad \textbf{for} $k \leftarrow 1$ \textbf{to} $|\mathcal{P}|$ \textbf{do} \\
  \ninstr & \qquad\qquad $\bpstyle{D}^{k,w}_{1,c} \leftarrow \bpstyle{V}[p^k_1]$\\
  \ninstr & \qquad\qquad \textbf{for} $r \leftarrow 2$ \textbf{to} $\len{P^k}$ \textbf{do} $\bpstyle{D}^{k,w}_{r,c} \leftarrow \bpstyle{V}[p^k_r] \AND M(k, r-1, c, j_{r-1}+1)$\\
  \ninstr & \qquad\qquad \textsc{report}($\bpstyle{D}^{k,w}_{\len{P^k},c}$)\\
  \ninstr & \qquad \textbf{for} $i \leftarrow cw$ \textbf{to} $\min(n, (c+1)w)-1$ \textbf{do} $\bpstyle{V}[T[i]] \leftarrow 0$\\
  \hline
\end{tabular}
\end{footnotesize}
\end{center}
\caption{The \gqalgoD algorithm.}
\label{algo:gqwea}
\end{figure}

To this end, let $V$ be a matrix of size $\sigma\times n$, where
$$
V_{s,c} =
\begin{cases}
1	& \text{if } s = T[c]\,, \\
0	& \text{otherwise}\,,
\end{cases}
$$
for $s\in \Sigma$ and $0 \le c < n$. Let also $\Sigma_{\mathcal{P}}$ be the subset of $\Sigma$ of size
$\sigma_{\mathcal{P}}\le \min(\sigma,\plen)$ of the symbols occurring
in the patterns.
Assume that we have the rows of $V$ which correspond to the symbols of
$\Sigma_{\mathcal{P}}$ encoded in an array of $\sigma$ bit-vectors of
$\ceil{n/w}$ bits. The entries corresponding to symbols not in
$\Sigma_{\mathcal{P}}$ are not initialized. The set
$\Sigma_{\mathcal{P}}$ can be trivially computed in
$O(\plen\log\sigma_{\mathcal{P}})$ time using a binary search tree.
The array can be computed in $O(\ceil{n/w}\;\sigma_{\mathcal{P}}+n)$
time.

The computation of $D^k$ will proceed row-wise, $w$ columns at once, as
each matrix element takes only one bit of storage and we can store $w$ columns 
into a single machine word. 
We adopt the notation $\bpstyle{D}^{k,w}_{r,c} = D^k_{r,cw \ldots (c+1)w-1}$, and 
analogously for $\bpstyle{V}$.
First notice that by definition $\bpstyle{D}^{k,w}_{1,c} = \bpstyle{V}^{w}_{p^k_1,c}$.
Assume now that the words $\bpstyle{D}^{k,w}_{r-1,c'}$ for $c' \le c$ have been already 
computed, and we want to compute $\bpstyle{D}^{k,w}_{r,c}$.
To do so, we need to check if any text
character in the current chunk $T[cw \ldots (c+1)w-1]$ matches the pattern character $p^k_r$ 
(readily solved as $\bpstyle{V}^w_{p^k_r,c}$), and if $g = j_{r-1}+1$ text characters back there was a matching 
pattern prefix of length $r-1$. The corresponding bits signaling these 
prefix matches, relevant to the current 
chunk, are distributed in at most two consecutive words in a $w$-bit wide interval in 
the previous row, namely in words 
$\bpstyle{D}^{k,w}_{r-1,c'-1}$ and $\bpstyle{D}^{k,w}_{r-1,c'}$, where $c'= c-\floor{g / w}$.
We select the relevant bits and combine them into a single word using the following function:
\[
M(k, r, c, g) = (\bpstyle{D}^{k,w}_{r,c - \floor{g / w} - 1} \gg (w - (g \MOD w))) \OR (\bpstyle{D}^{k,w}_{r,c-\floor{g/w}} \ll (g \MOD w)).
\]
The recurrence can now be written as
\[
\bpstyle{D}^{k,w}_{r,c} \leftarrow \bpstyle{V}^{w}_{p^k_r,c} \AND M(k, r-1, c, j_{r-1}+1),
\]
and $D^k$ can be computed in $O(\ceil{n/w}\;\len{P^k})$ time for any $k$. To check the occurrences,
we just scan the last row of the matrix and report every position where the bit is $1$.
To handle all the patterns, we 
run the search algorithm $|\mathcal P|$ times, which gives $O(\ceil{n/w}\;\plen+n+occ)$ total time,
including the preprocessing. The algorithm needs $O(\sigma +\ceil{\gmax/w} \max_k(\len{P^k}))$ 
words of space, 
as only the current column of $\bpstyle{V}^w$ and the last $O(\ceil{\gmax/w})$ columns of 
$\bpstyle{D}^{k,w}$ need to be kept in memory at any given time.

Based on the observation that we need only the rows of $V$
corresponding to the symbols in $\Sigma_{\mathcal{P}}$, we can also
manage to reduce the space for $V$ from $O(\sigma)$ to
$O(\min(\sigma_{\mathcal{P}},w) + \ceil{\sigma / w})$ words. First, we
build a (constant time) mapping $\mu$ from $\Sigma_{\mathcal{P}}$ to
$\{1,\ldots,\sigma_{\mathcal{P}}\}$. One (practical) way to compute
$\mu$ is to encode $\Sigma_{\mathcal{P}}$ in a bit-vector $\bpstyle{S}$
of $\sigma$ bits and build a rank
dictionary~\cite{DBLP:conf/fsttcs/Munro96} for it. The rank dictionary
allows one to compute the function $rank_1(\bpstyle{S}, i)$ which
returns the number of bits set to $1$ among the first $i$ positions in
$\bpstyle{S}$. In this way the mapping can be implemented as $\mu(s) =
rank_1(\bpstyle{S}, s)$. The rank dictionary can be built in
$O(\sigma)$ time and requires $O(\ceil{\sigma / w})$ space. We can then
encode $V$ using $O(\sigma_{\mathcal{P}})$ words and access the row
corresponding to any symbol $s\in\Sigma_{\mathcal{P}}$ as $V[\mu(s)]$.
If $w < \sigma_{\mathcal{P}}$ we can further reduce the space for $V$
by exploiting the fact that we process $T$ in chunks. The idea is to
compute, for a given chunk of $T$ of length $w$ starting at position
$c$, a bit-vector $\bpstyle{S}'$ of $\sigma_{\mathcal{P}}$ bits where
we set bit $\mu(s)$ for each $s\in \Sigma_{\mathcal{P}}$ which occurs in
the chunk. Note that if $s$ does not occur in the chunk then
$\bpstyle{V}^w_{s,c} = 0$. By building a rank dictionary for
$\bpstyle{S}'$ we obtain a mapping from the subset of
$\Sigma_{\mathcal{P}}$ encoded in $\bpstyle{S}'$ to $\{1,\ldots,w\}$,
i.e., $rank_1(\bpstyle{S}', \mu(s))$ is the mapping for symbol $s$. We
can then encode $V$ using $O(w)$ words and access the row
corresponding to any symbol $s\in\Sigma_{\mathcal{P}}$ as
$V[rank_1(\bpstyle{S}', \mu(s))]$, if bit $\mu(s)$ is set in
$\bpstyle{S}'$, and as a word equal to $0$ otherwise. Observe that
there are $\ceil{n/w}$ chunks; the time to compute any bit-vector
$\bpstyle{S'}$ and its rank dictionary is $O(w +
\sigma_{\mathcal{P}})$. Hence, we spend $O(\ceil{n/w}
\sigma_{\mathcal{P}} + n)$ time in total and maintain the original
time complexity. Alternatively, we can reduce the space for $V$ to
$O(\sigma_{\mathcal{P}})$ by computing $\mu$ using Ru\v{z}i\'c's
dictionary~\cite{DBLP:conf/icalp/Ruzic08} for $\Sigma_{\mathcal{P}}$,
whose construction requires
$O(\sigma_{\mathcal{P}}(\log\log\sigma_{\mathcal{P}})^2)$ time.

The algorithm, named \gqalgoD, is given in Figure~\ref{algo:gqwea}. We
thus obtain the following result:
\begin{theorem}
  Given a set $\mathcal{P}$ of gapped patterns and a text $T$ of
  length $n$, given in chunks of $w$ characters, all the occurrences
  in $T$ of the patterns in $\mathcal{P}$ can be reported in time
  $O(\ceil{n/w}\;\plen + n + occ)$.
\end{theorem}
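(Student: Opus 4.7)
The plan is to establish correctness of \gqalgoD\ and then analyse its running time. First I would observe that the transformation $S\mapsto \psi(S)$ applied to every keyword in every pattern of $\mathcal{P}$ preserves match semantics, because a gap of length $0$ simply forces two consecutive text positions, and that after this transformation the simplified unit-length recurrence for $D^k_{r,c}$ stated just before the algorithm is exactly what the algorithm implements. It then suffices to show that the word-packed update $\bpstyle{D}^{k,w}_{r,c}\leftarrow \bpstyle{V}^w_{p^k_r,c}\AND M(k,r-1,c,j_{r-1}+1)$, together with the base $\bpstyle{D}^{k,w}_{1,c}\leftarrow \bpstyle{V}^w_{p^k_1,c}$, agrees bit-by-bit with the scalar recurrence, after which the match-reporting step trivially produces every occurrence.

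The heart of the argument, and the step I expect to be the most delicate, is proving that bit $i$ of $M(k,r-1,c,g)$ is the value $D^k_{r-1,cw+i-g}$ for every $0\le i<w$. Writing $g=j_{r-1}+1$ and $c'=c-\floor{g/w}$, the source indices $cw+i-g$ range over a contiguous window of $w$ positions which always lies inside exactly two consecutive words of row $r-1$, namely $\bpstyle{D}^{k,w}_{r-1,c'-1}$ and $\bpstyle{D}^{k,w}_{r-1,c'}$. Shifting the earlier word right by $w-(g\bmod w)$ brings its high-order bits into the low-order positions of the result, shifting the later word left by $g\bmod w$ raises its low-order bits into the high-order positions, and the \OR\ concatenates the two halves. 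I would treat as separate cases the edge case $g\bmod w=0$, where $M$ collapses to a single word and one of the shifts would otherwise be an undefined shift by $w$, and the case of a negative-indexed previous word, which we read as $0$ to correctly represent cells to the left of column $0$. Once this alignment lemma is in place, the outer $\AND$ with $\bpstyle{V}^w_{p^k_r,c}$ injects the character-equality factor, and the correctness of the whole run follows by induction on rows, with the base row immediate from the definition of $\bpstyle{V}$.

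For the time analysis, initialising and clearing the bits of $\bpstyle{V}[\cdot]$ contributes $O(n)$ over all chunks (together with $O(\ceil{n/w}\sigma_{\mathcal{P}})$ or the alternative dictionary cost discussed before the theorem, which the text already absorbs into $O(n)$). Each of the $\ceil{n/w}$ outer iterations loops over all patterns and performs $O(\len{P^k})$ constant-time word operations for pattern $P^k$, so this stage sums to $O(\ceil{n/w}\plen)$. Extracting the set bits in the last row of each pattern via \textsc{report} contributes $O(occ)$ in total. Adding these three terms gives the advertised $O(\ceil{n/w}\plen + n + occ)$ bound and proves the theorem.
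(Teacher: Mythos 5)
Your proposal is correct and follows essentially the same route as the paper, which establishes the theorem by arguing that the word-packed update via $M$ implements the unit-length recurrence for $D^k$ ($w$ columns at a time) and then summing $O(\ceil{n/w}\,\len{P^k})$ over all patterns plus the $O(n)$ text/bit-vector handling and $O(occ)$ reporting. Your explicit alignment lemma for $M$, including the $g \bmod w = 0$ and negative-word-index edge cases, only makes rigorous what the paper states informally.
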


\begin{figure}
\begin{center}
\subfigure{
\includegraphics[width=0.47\textwidth]{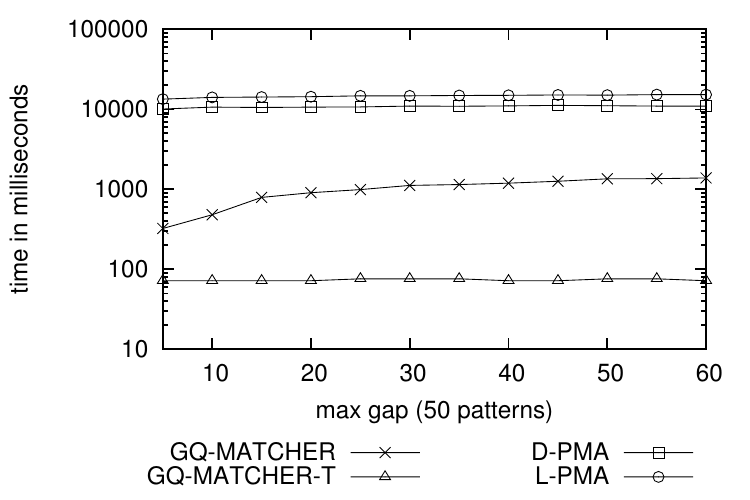}
}
\subfigure{
\includegraphics[width=0.47\textwidth]{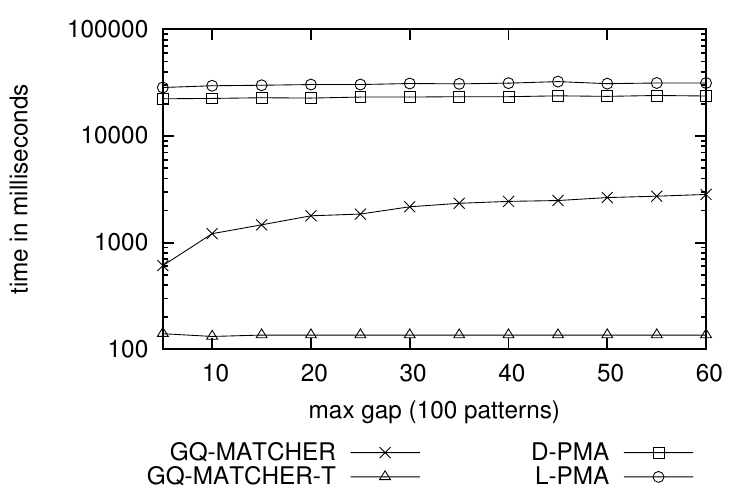}
}
\subfigure{
\includegraphics[width=0.47\textwidth]{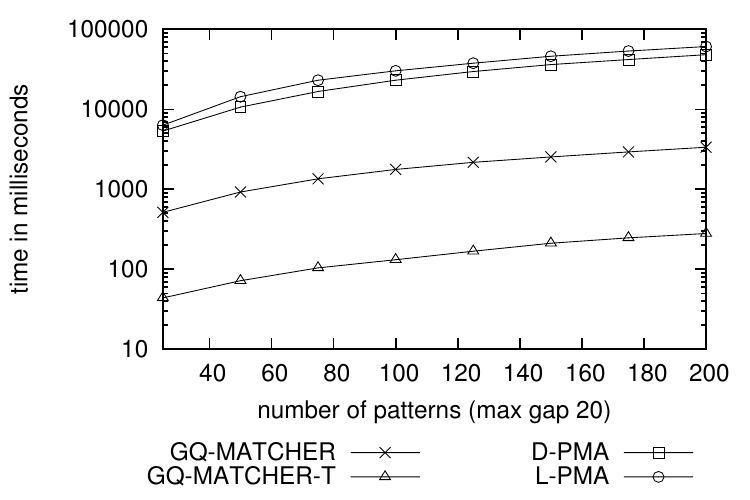}
}
\subfigure{
\includegraphics[width=0.47\textwidth]{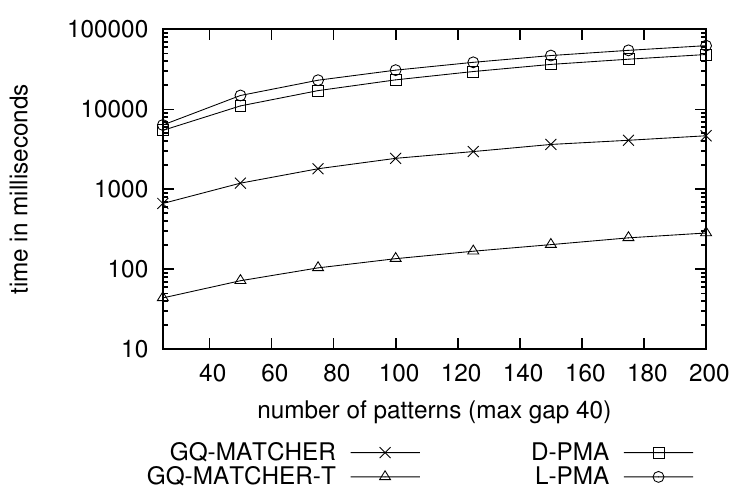}
}
\subfigure{
\includegraphics[width=0.47\textwidth]{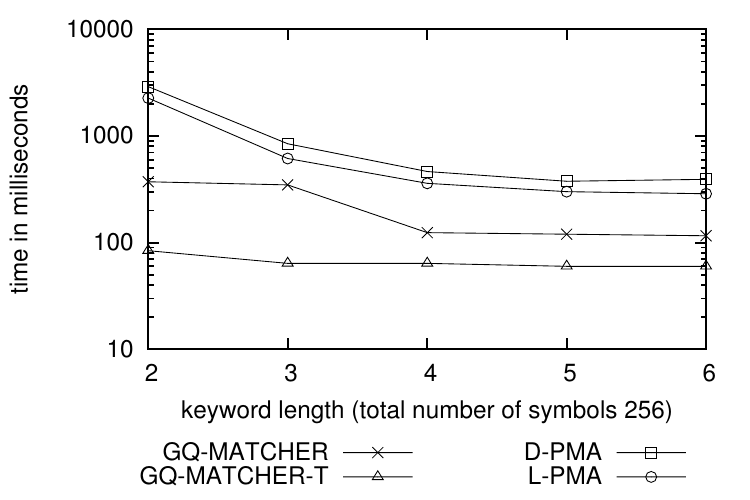}
}
\subfigure{
\includegraphics[width=0.47\textwidth]{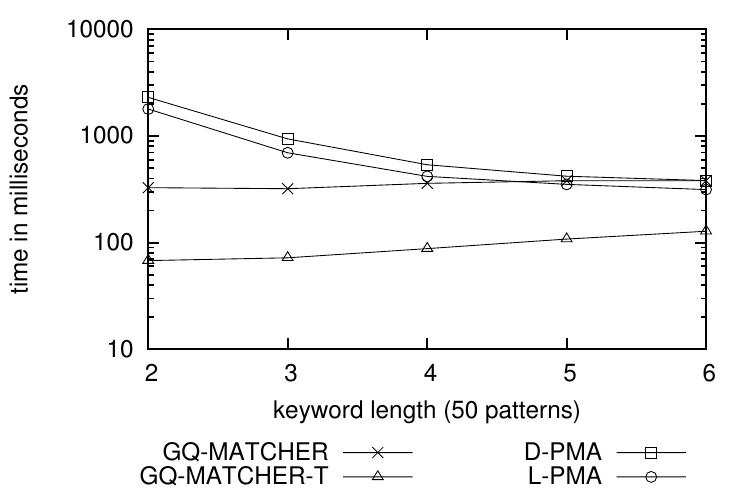}
}
\caption{Experimental results on the DNA sequence of
  the \textit{Escherichia coli} genome with randomly generated gapped
  patterns.
  Top row: $6$ unit-length keywords, varying gap interval with a set of 50 and 100 patterns;
  Middle row: $6$ unit-length keywords, varying number of patterns with maximum gap 20 and 40;
  Bottom row: $2$ keywords, varying keyword length.}
\label{fig:benchq1}
\label{fig:benchq2}
\label{fig:benchq3}
\end{center}
\end{figure}

\begin{figure}
\begin{center}
\subfigure{
\includegraphics[width=0.47\textwidth]{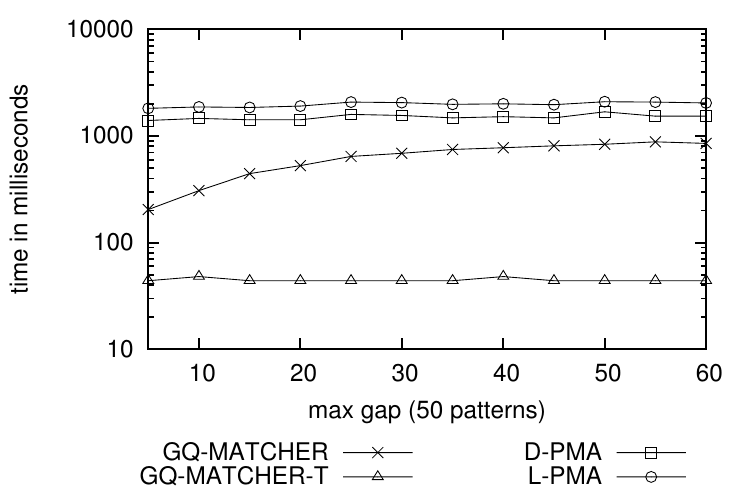}
}
\subfigure{
\includegraphics[width=0.47\textwidth]{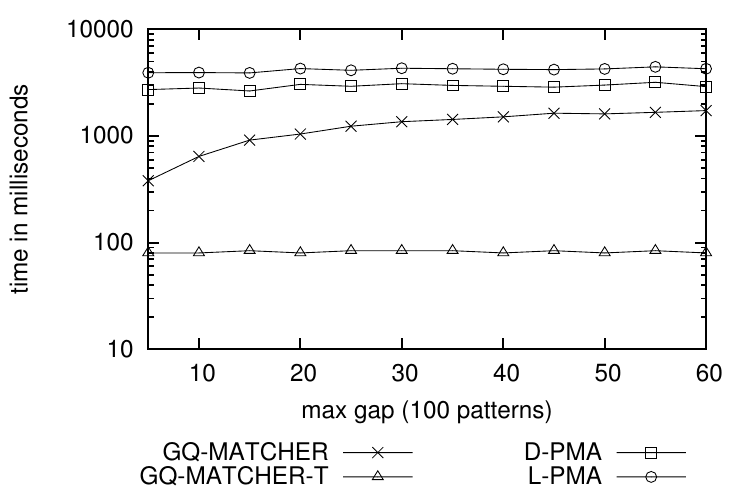}
}
\subfigure{
\includegraphics[width=0.47\textwidth]{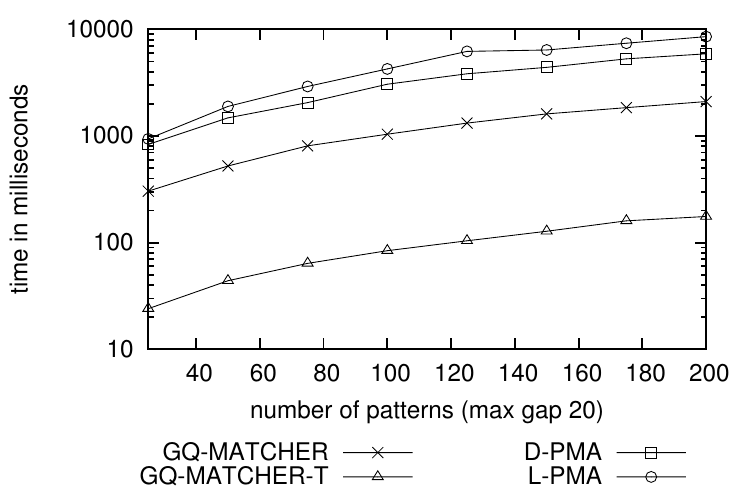}
}
\subfigure{
\includegraphics[width=0.47\textwidth]{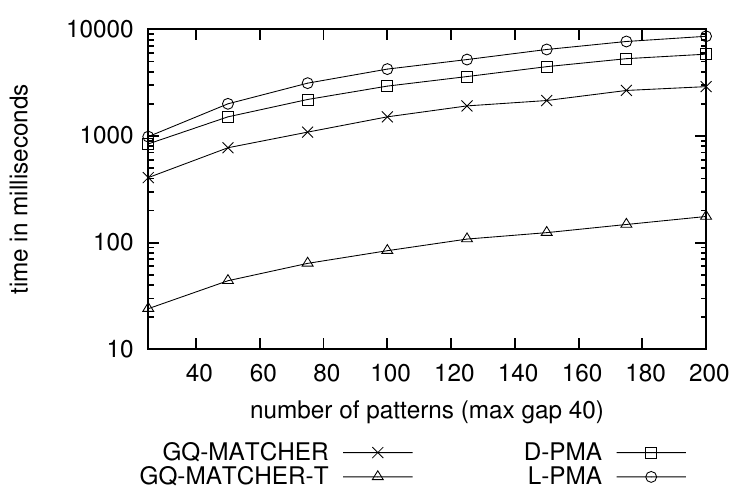}
}
\subfigure{
\includegraphics[width=0.47\textwidth]{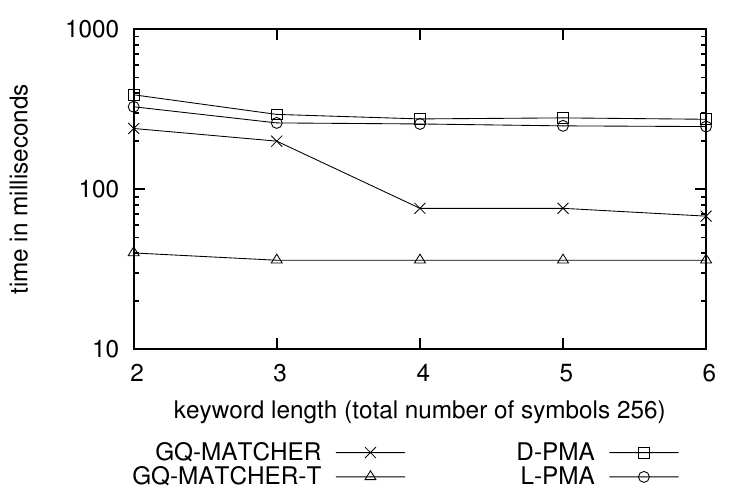}
}
\subfigure{
\includegraphics[width=0.47\textwidth]{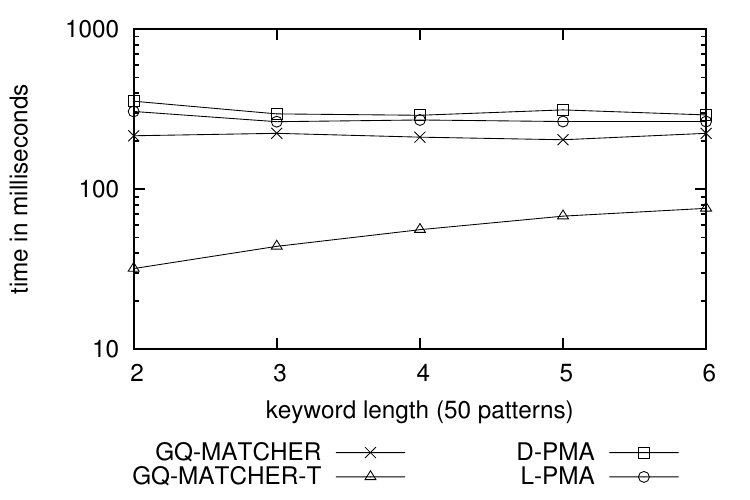}
}
\caption{Experimental results on the protein sequence of the
  \textit{Saccharomyces cerevisiae} genome with randomly generated
  gapped patterns.
  Top row: $6$ unit-length keywords, varying gap interval with a set of 50 and 100 patterns;
  Middle row: $6$ unit-length keywords, varying number of patterns with maximum gap 20 and 40;
  Bottom row: $2$ keywords, varying keyword length.}
\label{fig:benchq4}
\label{fig:benchq5}
\label{fig:benchq6}
\end{center}
\end{figure}

\section{Experimental results}\label{sec:experiments}

The proposed algorithms have been experimentally validated.
In particular, we compared the new algorithms \gqalgoC, \gqalgoD with the \textsc{d-pma} algorithm of~\cite{DBLP:conf/wea/HaapasaloSSS11}
and the \textsc{l-pma} algorithm of~\cite{DBLP:journals/tcs/BilleGVW12}.
The \gqalgoC and \gqalgoD have been implemented in the \textsf{C++} programming language
and compiled with the \texttt{GNU C++ Compiler 4.6}, using the
options \texttt{-O3}. The source code of the \textsc{d-pma} algorithm was kindly provided by the authors.
The test machine was a
3.00 GHz Intel Core 2 Quad Q9650 running Ubuntu 12.04 and running times were measured with the
\textsf{getrusage} function. The benchmarks consisted of searching for
a set of randomly generated gapped patterns in the DNA
sequence of $4,638,690$ base pairs of the \textit{Escherichia coli} genome
($\sigma=4$)\footnote{\url{http://corpus.canterbury.ac.nz/}} and in the protein sequence of $2,922,023$ symbols of the \textit{Saccharomyces
    cerevisiae} genome ($\sigma=20$)\footnote{\url{http://www.yeastgenome.org/}}.
The patterns were generated using the following procedure: given the
number $k$ of keywords, the length $l$ of each keyword and the maximum
length $b$ of a gap, we first randomly generate a sequence $g_1, g_2,
\ldots, g_{l-1}$ of $l-1$ gap lengths in the interval $[0,b]$; then,
we randomly sample a string of length $k\times l + \sum_{i=1}^{l-1} g_i$
from the text, and replace the substrings corresponding to the gaps
with their lengths.
Figures~\ref{fig:benchq1} and~\ref{fig:benchq4} show the experimental
results for the DNA and protein sequence, respectively. For each
sequence, we performed the following experiments:
\begin{enumerate}
\item (top row of Figures~\ref{fig:benchq1} and~\ref{fig:benchq4})
  searching a set of gapped patterns with $6$ keywords of unit length with a
  fixed number of patterns equal to $50$ and $100$, respectively, and
  such that the maximum gap varies between $5$ and $60$;
\item (middle row of Figures~\ref{fig:benchq1} and~\ref{fig:benchq4})
  searching a set of gapped patterns with $6$ keywords of unit length with a
  fixed maximum gap of $20$ and $40$, respectively, and such that the
  number of patterns varies between $25$ and $200$;
\item (bottom row of Figures~\ref{fig:benchq1} and~\ref{fig:benchq4})
searching a set of gapped patterns with $2$ keywords and a fixed maximum gap
of $20$ and such that the keyword length varies between $2$ and $6$.
In the benchmark to the left the number of patterns is calculated
using the formula $4 w / 2l$, where $l$ is the keyword length, so as
to fix the total number of symbols, i.e., $\plen$, to $4 w$ (i.e., $4$
words in our algorithm). In the one to the right the number of
patterns is fixed to $50$, so that $\plen$ increases as the keyword
length grows.
\end{enumerate}
We used a logarithmic scale on the y axis.
Note that the number of words used by our algorithm is equal to
$\ceil{6\times |\mathcal{P}| / w}$, so it is between $3$ and $19$ in
our experiments since $w=64$. Concerning the benchmark on DNA, the
experimental results show that the new algorithms are significantly
faster (up to 50 times) than the \textsc{d-pma} and \textsc{l-pma}
algorithms in the case of unit-length keywords (top and middle row).
in the case of arbitrary length keywords (bottom row), our algorithms
are significantly faster than $\textsc{d-pma}$ and $\textsc{l-pma}$ up
to keyword length $4$, while for longer keywords they have similar
performance. In the benchmark on the protein sequence the
\textsc{d-pma} and \textsc{l-pma} algorithms are considerably faster compared
to the case of DNA, which is expected since the average value of $\alpha$
and $\alpha'$ is inversely proportional to the alphabet size. Instead,
our algorithms exhibit a similar behaviour and are still faster than both
\textsc{d-pma} and \textsc{l-pma}.

The \gqalgoD algorithm is preferable if the text can be processed by
reading $w$ symbols at a time. This implies that, in the worst-case,
we report an occurrence of a pattern at position $i$ in the text only
after reading the symbols up to position $i+w-1$. This condition may
not be feasible for some applications. Otherwise, albeit slower, the
\gqalgoC algorithm is a good choice.
\section{Conclusions}

Motivated by a problem in computational biology, we have presented new
algorithms for the problem of \emph{multiple string matching} of
gapped patterns, where a gapped pattern is a sequence of strings such
that there is a gap of fixed length between each two consecutive
strings. The presented algorithms are based on dynamic programming and
bit-parallelism, and lie in a middle-ground among the existing
algorithms. In fact, their time complexity is close to the best
existing bound and, yet, they are also practical. We have also
assessed their performance with experiments and showed that they are
fast in practice and preferable if the strings in the patterns have
unit-length.

\section{Acknowledgments}

We thank the anonymous reviewers and Djamal Belazzougui for helpful comments.


\begin{footnotesize}
\bibliographystyle{plain}
\bibliography{gq}
\end{footnotesize}

\newpage

\appendix
\section{Proof of Theorem 1}

\setcounter{theorem}{0}

\begin{theorem}
Problem PMDBS is NP-hard in the strong sense.
\end{theorem}

\begin{proof}
Given an input $G = (V = \{v_1,\dots,v_n\},E = \{e_1,\dots,e_m\})$ to the Hamiltonian Path Problem, we construct the following instance $\mathcal{L}_G$ to Problem~\ref{problem-nphard} (see Fig.~\ref{fig-reduction} for an example).
\begin{itemize}
\item The universe $U$ consists of numbers $\{1,\dots,m\}$, which will be used to encode adjacencies, and numbers $\{m+1,\dots,n^2-m\}$, which will be used for padding, to ensure that all lists have the same length.
\item For every vertex $v_i \in V$, we have a list $L_i$ constructed as follows. Suppose the incident edges of $v_i$ are $e_{i_1}, e_{i_2}, \dots, e_{i_t}$, and say that the \emph{basic list of $L_i$} is the list $i_1,i_2,\dots,i_t$ padded (at the end) with $m-t$ new numbers from $\{m+1,\dots,n^2-m\}$, unused by any other list. List $L_i$ consists of $n$ concatenated copies of its basic list, so that $|L_i| = nm$.
\item We set $b = (n+1)m$ and $M = (2m - 1)(n-1) + m$.
\end{itemize}

We show that $G$ has a Hamiltonian path if and only if instance $\mathcal{L}_G$ admits a permutation $\pi$ of $1,\ldots,n$ such that the sum, over all lists $L^b$ in the $b$-mapping of $L_{\pi(1)},\dots,L_{\pi(n)}$, of the number of distinct elements in $L^b$ is at most $M$. Since the values of the integers in $U$ are bounded by a polynomial in the size of the lists $L_1,\dots,L_n$, this claim will entail the NP-hardness in the strong sense of Problem~PMDBS.

First, observe that from the choice of $b$ and of the lengths of lists $L_i$, for any permutation $\pi$ of $1,\dots,n$, the $b$-mapping $L_1^b,\dots,L_r^b$ of $L_{\pi(1)}$, $L_{\pi(2)}$, \ldots, $L_{\pi(n)}$ has a special form. Indeed, since $b = (n+1)m$, and the length of the lists $L_i$ is $nm$, we have that $r = \ceil{(n^2m)/((n+1)m)} = \ceil{n^2/(n+1)} = n$. It can be easily shown by induction that, for all $1 \leq j \leq n-1$, list $L_j^b$ consists of the last $(n-j+1)m$ integers in the list $L_{\pi(j)}$ followed by the first $jm$ integers from the list $L_{\pi(j+1)}$. List $L_{n}^b$ consists of the last $m$ integers of list $L_{\pi(n)}$. 

For the forward direction, let $P = v_{i_1},\dots,v_{i_n}$ be a Hamiltonian path of $G$. We show that the permutation $\pi$ of $1,\dots,n$ defined such that $\pi(j) = i_j$ satisfies the bound $M$. Let $L_1^b, L_2^b, \ldots, L_n^b$ be the $b$-mapping of $L_{\pi(1)},L_{\pi(2)}, \ldots, L_{\pi(n)}$. From the above observation, for all $1 \leq j \leq n-1$, the number of distinct integers in $L_j^b$ equals the number of distinct integers in $L_{\pi(j)}$, which is $m$, plus the number of distinct integers in $L_{\pi(j+1)}$, which is $m$, minus the number of integers shared between $L_{\pi(j)}$ and $L_{\pi(j+1)}$. Since $v_{\pi(j)}$ and $v_{\pi(j+1)}$ are connected by an edge, then the index of this edge appears in both $L_{\pi(j)}$ and $L_{\pi(j+1)}$, thus the number of distinct elements in $L_j^b$ is at most $2m - 1$. The claim is now clear, since $L_n^b$ consists of $m$ distinct integers.

For the backward implication, let $\pi$ be a permutation of $1,\dots,n$ such that the sum, over all lists $L^b$ in the $b$-mapping of $L_{\pi(1)},\dots,L_{\pi(n)}$, of the number of distinct elements in $L^b$ is at most $M$. We claim that the sequence $P = v_{\pi(1)},\dots,v_{\pi(n)}$ is a Hamiltonian path in $G$. Since $\pi$ is a permutation of $1,\dots,n$, we only have to show that for all $1 \leq i \leq n-1$, there is an edge between $v_{\pi(i)}$ and $v_{\pi(i+1)}$.

Let $L_1^b,L_2^b,\dots,L_n^b$ be the $b$-mapping of $L_{\pi(1)}$, $L_{\pi(2)}$, \ldots, $L_{\pi(n)}$. The fact that the number of distinct elements in the list $L_n^b$ is $m$ entails that the sum, over all $1 \leq j \leq n-1$, of the number of distinct elements in $L^b_j$ is at most $M-m = (2m-1)(n-1)$. 
For all $1 \leq j \leq n-1$, vertices $v_{\pi(j)}$ and $v_{\pi(j+1})$ have at most one edge incident to both of them (the edge connecting them), therefore, the number of distinct integers in each list $L_{j}^b$ is at least $2m-1$. From the above observation, for all $1 \leq j \leq n-1$, the number of distinct integers in each list $L_{j}^b$ is exactly $2m-1$.

Since the number of distinct integers in the list $L_{\pi(j)}$ is $m$ and the number of distinct integers in the list $L_{\pi(j+1)}$ is $m$, but the number of distinct integers in $L_j^b$ is at most $2m-1$, we have that lists $L_{\pi(j)}$ and $L_{\pi(j+1)}$ share at least one integer. We padded the basic lists of $L_{\pi(j)}$ and $L_{\pi(j+1)}$ with integers unique to them, thus the only integer shared by them must be the index of the edge incident to both $v_{\pi(j)}$ and $v_{\pi(j+1)}$. Such an edge connects $v_{\pi(j)}$ and $v_{\pi(j+1)}$, and thus $P$ is a path in $G$.
\end{proof}

\end{document}